\newtheorem{theorem}{Theorem}
\newenvironment{proof}{\noindent\textbf{Proof}.}{\hfill$\square$}
\newtheorem{definition}{Definition}
\newtheorem{example}{Example}
\pgfplotsset{compat=newest}
\DeclareMathOperator{\round}{round}
\DeclareFontFamily{U}{tipa}{}
\DeclareFontShape{U}{tipa}{m}{n}{<->tipa10}{}
\newcommand{\arc@char}{{\usefont{U}{tipa}{m}{n}\symbol{62}}}%
\newcommand{\arc}[1]{\mathpalette\arc@arc{#1}}
\newcommand{\arc@arc}[2]{%
  \sbox0{$\m@th#1#2$}%
  \vbox{
    \hbox{\resizebox{\wd0}{\height}{\arc@char}}
    \nointerlineskip
    \box0
  }%
}
\journal{European Journal of Operational Research}
\begin{document}

\title{A case study of the profit-maximizing multi-vehicle pickup and delivery selection problem for the road networks with the integratable nodes}

\author[label1]{Aolong Zha\corref{cor1}}
\ead{a-zha@g.ecc.u-tokyo.ac.jp}
%\ead[url]{https://cyouryuu.github.io}
\address[label1]{Research Center for Advanced Science and Technology, The University of Tokyo}
\cortext[cor1]{Corresponding author}

\author[label2]{Qiong Chang}%\corref{cor1}}
\ead{q.chang@c.titech.ac.jp}
\address[label2]{School of Computing, Tokyo Institute of Technology}
%\cortext[cor1]{Corresponding author}

\author[label1]{Naoto Imura}
\ead{nimura@g.ecc.u-tokyo.ac.jp}
%\address[label1]{Research Center for Advanced Science and Technology, The University of Tokyo}

\author[label1]{Katsuhiro Nishinari}
\ead{tknishi@mail.ecc.u-tokyo.ac.jp}
%\address[label1]{Research Center for Advanced Science and Technology, The University of Tokyo}

\begin{abstract}
This paper is a study of an application-based model in 
profit-maximizing multi-vehicle pickup and delivery selection problem (PPDSP). 
The graph-theoretic model proposed by existing studies of PPDSP is based on 
transport requests to define the corresponding nodes 
(i.e., each request corresponds to a pickup node and a delivery node). 
In practice, however, there are probably multiple requests coming from 
or going to an identical location. 
Considering the road networks with the integratable nodes as above, 
we define a new model based on the integrated nodes for the corresponding PPDSP 
and propose a novel mixed-integer formulation. 
In comparative experiments with the existing formulation, 
as the number of integratable nodes increases, 
our method has a clear advantage in terms of the number of variables 
as well as the number of constraints required in the generated instances, 
and the accuracy of the optimized solution obtained within a given time. 
\end{abstract}

\begin{keyword}
Integer programming \sep Optimization \sep Transportation
\end{keyword}

\maketitle

\section{Introduction}

Recently, 
the profit-maximizing multi-vehicle pickup and delivery selection problem (PPDSP) 
has gained a lot of attention in the field of practical transportation and logistics 
\cite{conf/cpaior/LiuAB18,journals/cor/RiedlerR18,ASGHARI2020101815,HUANG20201}. 
This problem was first proposed in \cite{journals/eor/QiuFN17}, 
which involves three classical problem models: 
\emph{routing optimization}, \emph{pickup and delivery}, and 
\emph{selective pickup} (a.k.a. \emph{knapsack}). 
Solving this problem quickly and optimally can both help 
improve the operational efficiency of the carriers 
and contributes to more eco-friendly transportation. 

To the best of our knowledge, in the graph-theoretic models constructed 
in the existing studies 
\cite{journals/isci/TingLHL17,journals/ors/GanstererKH17,conf/gol/Al-ChamiFMM18,journals/eor/Ahmadi-JavidAM18}, the definitions of nodes are based on 
the pickup and delivery location from the requests. 
This means that one request needs to correspond to two nodes. 
However, in application scenarios, 
there are often plural requests coming from or to arrive at the same location. 
The number of variables to be required in the existing mixed-integer formulation 
heavily depends on the number of nodes in the model, 
and since PPDSP is an $\mathcal{NP}$-hard problem, 
its computational complexity is exponential with respect to the number of variables. 

\begin{figure}
\centering
\makebox[\textwidth]{
\scalebox{.5}{
\begin{subfigure}{\textwidth}
\centering
\begin{tikzpicture}[scale=1]
\begin{scope}[auto, every node/.style={draw, thick, minimum height=65pt, rounded corners=5pt, align=left, font=\Large}]
    \node [inner sep=16pt] (1) at (0,0) {~~~~Depot~~~~};
    \node [inner sep=3pt] (2) at (5,0) {Req. 1\\pickup\\coord.};
    \node [inner sep=3pt] (3) at (7.15,0) {Req. 2\\pickup\\coord.};
    \node [inner sep=3pt] (4) at (5,-4) {Req. 3\\pickup\\coord.};
    \node [inner sep=3pt] (5) at (7.15,-4) {Req. 2\\dropoff\\coord.};
    \node [inner sep=3pt] (6) at (-1.1,-4) {Req. 3\\dropoff\\coord.};
    \node [inner sep=3pt] (7) at (1.1,-4) {Req. 1\\dropoff\\coord.};

    \path [->, line width=2pt] (1) edge node[draw=none] {} (2);
    \path [->, line width=2pt] (2) edge node[draw=none] {} (3);
    \path [->, line width=2pt] (3) edge node[draw=none] {} (5);
    \path [->, line width=2pt] (5) edge node[draw=none] {} (4);
    \path [->, line width=2pt] (4) edge node[draw=none] {} (7);
    \path [->, line width=2pt] (7) edge node[draw=none] {} (6);
    \path [->, line width=2pt] (6) edge node[draw=none] {} (1);
\end{scope}
\end{tikzpicture}
\captionsetup{font=Large}
\caption{}
\label{fig:1.1}
\end{subfigure}
\hspace{-150pt}
\begin{subfigure}{\textwidth}
\centering
\begin{tikzpicture}[scale=1]
\begin{scope}[auto, every node/.style={draw, thick, minimum height=65pt, rounded corners=5pt, align=left, font=\Large}]
    \node [inner sep=16pt] (1) at (0,0) {~~~~Depot~~~~};
    \node [inner sep=21pt] (2) at (6,0) {Location 1};
    \node [inner sep=21pt] (3) at (6,-4) {Location 2};
    \node [inner sep=21pt] (4) at (0,-4) {Location 3};

    \path [->, line width=2pt] (1) edge node[draw=none] {} (2);
    \path [->, line width=2pt] (2) edge node[draw=none] {} (3);
    \path [->, line width=2pt] (3) edge node[draw=none] {} (4);
    \path [->, line width=2pt] (4) edge node[draw=none] {} (1);
\end{scope}
\end{tikzpicture}
\captionsetup{font=Large}
\caption{}
\label{fig:1.2}
\end{subfigure}
\hspace{-150pt}
\begin{subfigure}{\textwidth}
\centering
\begin{tikzpicture}[scale=1]
\begin{scope}[auto, every node/.style={draw, thick, minimum height=65pt, rounded corners=5pt, align=left, font=\Large}]
    \node [inner sep=16pt] (1) at (0,0) {~~~~Depot~~~~};
    \node [inner sep=3pt] (2) at (5,0) {Req. 1\\pickup\\coord.};
    \node [inner sep=3pt] (3) at (7.15,0) {Req. 2\\pickup\\coord.};
    \node [inner sep=3pt] (4) at (5,-4) {Req. 3\\pickup\\coord.};
    \node [inner sep=3pt] (5) at (7.15,-4) {Req. 2\\dropoff\\coord.};
    \node [inner sep=3pt] (6) at (-1.1,-4) {Req. 3\\dropoff\\coord.};
    \node [inner sep=3pt] (7) at (1.1,-4) {Req. 1\\dropoff\\coord.};

    \path [->, line width=2pt] (1) edge node[draw=none] {} (2);
    \path [->, line width=2pt] (2) edge node[draw=none] {} (7);
    \path [->, line width=2pt] (7) edge node[draw=none] {} (3);
    \path [->, line width=2pt] (3) edge node[draw=none] {} (5);
    \path [->, line width=2pt] (5) edge node[draw=none] {} (4);
    \path [->, line width=2pt] (4) edge [out=220,in=320] node[draw=none] {} (6);
    \path [->, line width=2pt] (6) edge node[draw=none] {} (1);
\end{scope}
\end{tikzpicture}
\vspace*{-45pt}
\captionsetup{font=Large}
\caption{}
\label{fig:1.3}
\end{subfigure}
}
}
\caption{A simple example for explaining the relationship between 
the request-based model (i.e., Figures \ref{fig:1.1} and \ref{fig:1.3}) and 
the location-based model (i.e., \Cref{fig:1.2}), where 
\{Req. 1 pickup coord.\} and \{Req. 2 pickup coord.\}, 
\{Req. 3 pickup coord.\} and \{Req. 2 dropoff coord.\}, and 
\{Req. 3 dropoff coord.\} and \{Req. 1 dropoff coord.\} can be integrated as 
\{location 1\}, \{location 2\} and \{location 3\}, respectively. 
The route shown in \Cref{fig:1.3} can be regarded as a feasible solution 
of the request-based model, but cannot be corresponded to in 
the location-based model.}
\label{fig:1}
\end{figure}
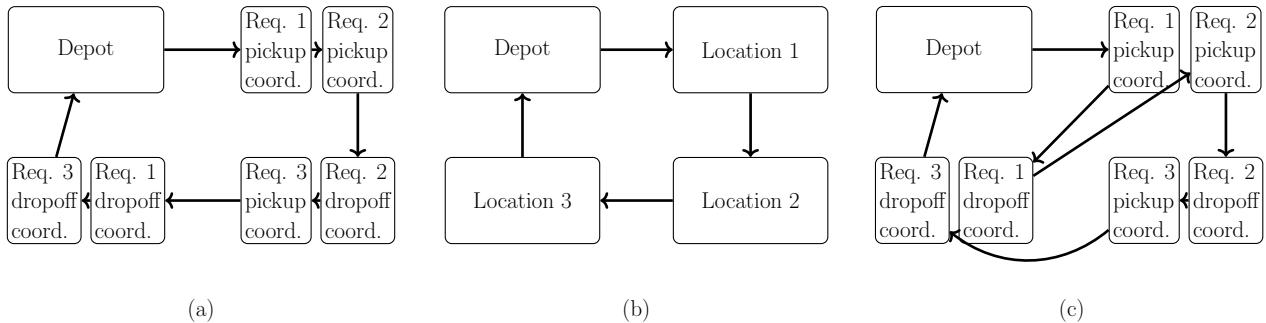

The motivation of this study is to provide a more reasonable and effective 
mathematical model for practical logistics and transportation problems. 
Considering the road networks with the integratable nodes as above, 
we proposes a new concise model in which the definition of nodes 
is based on the locations rather than the requests, 
and give a novel mixed-integer programming formulation 
that reduces the number of required variables. 
It is necessary to claim that the solution set of 
our proposed location-based method is a subset of 
the solution set of the request-based method. 
This is because the integrated node can only be passed 
at most once under the \emph{Hamiltonian cycle} constraint, 
where ``at most'' is due to the objective function of 
profit-maximizing (i.e., it is possible that any node 
will not be traversed). 
In \Cref{fig:1}, we take the delivery of single truck as an example, 
and assume that the truck is transported according to the route 
given in \Cref{fig:1.1} as a feasible solution in this scenario, 
where the pickup coordinates of request 1 (abbr. Req. 1 pickup coord.) 
and Req. 2 pickup coord. are the identical locations. 
We can integrate them as location 1 as shown in \Cref{fig:1.2}, 
and after integrating all the same coordinates into one location 
respectively, our model becomes more concise. 
This integration changes the nodes of the model from 
repeatable coordinates based on the request to the unique locations, 
which results in each location can only be visited once. 
Therefore, the location-based model can also correspond to 
the feasible solutions in \Cref{fig:1.1}, but not to some of 
the feasible solutions corresponded to the request-based model, 
such as the solution given in \Cref{fig:1.3}. 
Whereas in real logistics application scenarios, 
making multiple retraces to the same location is rare 
(e.g. insufficient vehicle capacity, goods cannot be mixed, etc.), 
and also non-efficient. 
For such locations with a large number of requests or high loading volume, 
it is a more common strategy to assign multiple vehicles to them. 
Therefore, we argue that the location-based model 
can improve the optimization efficiency although 
it reduces the range of feasible solutions.

\section{Preliminaries}
\label{sec:2}

Given a directed graph $G=(V,E)$, where 
$V=\{0,1,2,\ldots,|V|\}$ is the set of nodes 
representing the location points (0 is depot) 
and $E$ is the set of arc, denoted as $\arc{od}$, 
and given a set of trucks $T=\{1,2,\ldots,|T|\}$, 
we define a type of Boolean variables $x^t_{od}$ that 
is equal to 1 if the truck $t$ passes through 
the arc $\arc{od}$ and 0 otherwise. 
We denote the load capacity of truck $t$ as $c^t$, and 
the cost of the truck $t$ traversing 
the arc $\arc{od}$ as $l^t_{od}$, 
where $t\in T$ and $o,d\in V$. 

Let $R=\{1,2,\ldots,|R|\}$ be a set of requests. 
Each request $r$ $(r\in R)$ is considered as a tuple 
$r=\langle w_r, q_r, f(r), g(r)\rangle$, where 
\begin{itemize}
\item $w_r$ is the the payment that can be received 
for completing the shipping of request $r$; 
\item $q_r$ is the volume of request $r$; 
\item $f(r)$ is the loading point of request $r$; 
\item $g(r)$ is the unloading point of request $r$, 
\end{itemize}
and $f:R\to V\setminus\{0\}$ (resp. $g:R\to V\setminus\{0\}$) 
is a function for mapping the loading (resp. unloading) 
point of requests $r$. 
We also define another type of Boolean variables $y^t_r$ 
that is equal to 1 if request $r$ is allocated to truck $t$ 
and 0 otherwise, where $r\in R$ and $t\in T$. 
Besides, we denote the number of location points visited 
by truck $t$ before it reaches $v~(v\in V\setminus\{0\})$ as $u^t_v$. 

\begin{definition}[Delivery of Truck $t$]
Delivery of truck $t$ is denoted by 
$D_t=\bigcup_{r\in R}\{r\,|\,y^t_r=1\}$, 
where $t\in T$. 
\end{definition}

\begin{definition}[Route of Truck $t$]
Route of truck $t$ is denoted by 
$S_t=\bigcup_{o\in V}\bigcup_{d\in V}\{\arc{od}\,|\,x^t_{od}=1\}$, 
where $t\in T$. $S_t$ satisfies the following conditions 
if $D_t\neq\emptyset$: 
\begin{description}
\item [Hamiltonian Cycle] 
\begin{itemize}
\item Denote $P_t=\{0\}\cup\bigcup_{r\in D_t}\{f(r),g(r)\}$ as the set of location points visited and departed exactly once by truck $t$, where the predecessor and successor are the same node is not counted (e.g., even if $\arc{vv}\,|\,x^t_{vv}=1$, neither this time can be included in the number of visits or departures of truck $t$ to/from location $v$.); 
%\begin{align*}
%    &\textrm{Visit:}~\forall d\in P_t,\sum_{o\in P_t} x^t_{od}=1,
%    &\textrm{Departure:}~\forall o\in P_t,\sum_{d\in P_t} x^t_{od}=1;
%\end{align*}
\item Ensure that no subtour exists in $S_t$.\footnote{
    Existing studies usually include the constraints on time windows, which can eliminate subtour. In this study, in order to compare the request-based model with the location-based model, common parts of both models are omitted (e.g., the time windows constraints). Instead, we use the most basic MTZ-formulation to eliminate subtour.
}
%\begin{align*}
%    &u_{td}-u_{to}\ge 1~\textrm{if}~x^t_{od}=1,
%\end{align*}
\end{itemize}
\item [Loading Before Unloading] $\forall i\in D_t,u^t_{f(i)}<u^t_{g(i)}.$
\item [Capacity Limitation] At any time, the total volume of cargo carried by truck $t$ cannot exceed its capacity $c^t$. 
\end{description}
\end{definition}

\begin{definition}[Delivery Routing Solution]
Delivery routing solution $DS=\bigcup_{t\in T}\{(D_t,S_t)\}$ 
that is a partition of $R$ into disjoint and contained $D_t$ with 
the corresponding $S_t$:
\begin{align*}
    &\forall i,j\,(i\neq j),~D_i\cap D_j=\emptyset,~\bigcup_{D_i\in DS}D_i\subseteq R,~\bigcup_{S_i\in DS}S_i\subseteq E.
\end{align*}
Denote the set of all possible delivery routing solutions as $\Pi(R,T)$. 
\end{definition}

\begin{definition}[Profit-Cost Function]
A profit-cost function assigns a real-valued profit to every $D_t$: 
$w:D_t\to\mathbb{R}$ and a cost to every $S_t$: 
$l:S_t\to\mathbb{R}$, where $w(D_t)=\sum_{r\in D_t}w_r$ and 
$l(S_t)=\sum_{\arc{od}\in S_t}l^t_{od}$. 
For any delivery routing solution $DS\in\Pi(R,T)$, 
the value of $DS$ is calculated by 
\begin{align*}
    &\xi(DS)=\sum_{D_t\in DS}w(D_t)-\sum_{S_t\in DS}l(S_t).
\end{align*}
\end{definition}

In general, a delivery routing solution $DS$ that 
considers only maximizing profits or minimizing costs 
is not necessarily an optimal $DS$. 
Therefore, we have to find the optimal delivery routing solution 
that maximizes the sum of the values of profit-cost functions. 
We define a delivery routing problem in profit-cost function 
as follows. 

\begin{definition}[PPDSP]
For a set of requests and trucks $(R,T)$, a profit-maximizing 
multi-vehicle pickup and delivery selection problem (PPDSP) 
is to find the optimal delivery routing solution $DS^*$ such that 
\begin{align*}
    DS^*\in\arg\max_{DS\in\Pi(R,T)}\xi(DS).
\end{align*}
\end{definition}

Here we show an example of PPDSP. 

\begin{example}
Assume that two trucks $T=\{t_1,t_2\}$ are responsible for 
three requests $R=\{r_1,r_2,r_3\}$ 
with the following conditions. 
\begin{itemize}
    \item The information about the requests:\\[8pt]
        \begin{tabular}{ccccc}
        \toprule
        Request & $w_r$ & $q_r$ & $f(r)$ & $g(r)$\\
        \midrule
        $r_1$ & $13$ & $4$ & $a$ & $c$\\
        $r_2$ &  $7$ & $2$ & $a$ & $b$\\
        $r_3$ &  $4$ & $1$ & $b$ & $c$\\
        \bottomrule
        \end{tabular}
    \item The information about the trucks:
        \vspace{-8pt}
        \begin{itemize}
            \item The capacities of the trucks $c^{t_1}=6$ 
            and $c^{t_2}=3$. 
            \item The cost matrices of each truck through each arc 
            ($\delta$ is depot).\\[8pt]
            \begin{tabular}{|c|c|c|c|c|c|}
            \hline
            $l^{t_1}_{od}$ & $\delta$ & $a$ & $b$ & $c$\\
            \hline
            $\delta$ & $0$ & $2$ & $2$ & $2$\\
            \hline
            $a$      & $2$ & $0$ & $4$ & $7$\\
            \hline
            $b$      & $2$ & $4$ & $0$ & $2$\\
            \hline
            $c$      & $2$ & $7$ & $2$ & $0$\\
            \hline
            \end{tabular}
            \hspace{10pt}
            \begin{tabular}{|c|c|c|c|c|c|}
            \hline
            $l^{t_2}_{od}$ & $\delta$ & $a$ & $b$ & $c$\\
            \hline
            $\delta$ & $0$ & $1$ & $1$ & $1$\\
            \hline
            $a$      & $1$ & $0$ & $3$ & $5$\\
            \hline
            $b$      & $1$ & $3$ & $0$ & $1$\\
            \hline
            $c$      & $1$ & $5$ & $1$ & $0$\\
            \hline
            \end{tabular}
        \end{itemize}
\end{itemize}
For all $DS\in\Pi(R,T)$, we have their respective profit-cost function as follows. 
\begin{longtable}{l}
$\xi(\{(D_{t_1}=\emptyset, 
S_{t_1}=\emptyset), 
(D_{t_2}=\emptyset, 
S_{t_2}=\emptyset)\})=0$,\\
$\xi(\{(D_{t_1}=\emptyset, 
S_{t_1}=\emptyset), 
(D_{t_2}=\{r_2\}, 
S_{t_2}=\{\arc{\delta a},\arc{ab},\arc{b\delta}\})\})=2$,\\
$\xi(\{(D_{t_1}=\emptyset, 
S_{t_1}=\emptyset), 
(D_{t_2}=\{r_3\}, 
S_{t_2}=\{\arc{\delta b},\arc{bc},\arc{c\delta}\})\})=1$,\\
$\xi(\{(D_{t_1}=\emptyset, 
S_{t_1}=\emptyset), 
(D_{t_2}=\{r_2,r_3\}, 
S_{t_2}=\{\arc{\delta a},\arc{ab},\arc{bc},\arc{c\delta}\})\})=5$,\\
$\xi(\{(D_{t_1}=\{r_1\}, 
S_{t_1}=\{\arc{\delta a},\arc{ac},\arc{c\delta}\}), 
(D_{t_2}=\emptyset, 
S_{t_2}=\emptyset)\})=2$,\\
$\xi(\{(D_{t_1}=\{r_1\}, 
S_{t_1}=\{\arc{\delta a},\arc{ac},\arc{c\delta}\}), 
(D_{t_2}=\{r_2\}, 
S_{t_2}=\{\arc{\delta a},\arc{ab},\arc{b\delta}\})\})=4$,\\
$\xi(\{(D_{t_1}=\{r_1\}, 
S_{t_1}=\{\arc{\delta a},\arc{ac},\arc{c\delta}\}), 
(D_{t_2}=\{r_3\}, 
S_{t_2}=\{\arc{\delta b},\arc{bc},\arc{c\delta}\})\})=3$,\\
$\xi(\{(D_{t_1}=\{r_1\}, 
S_{t_1}=\{\arc{\delta a},\arc{ac},\arc{c\delta}\}), 
(D_{t_2}=\{r_2,r_3\}, 
S_{t_2}=\{\arc{\delta a},\arc{ab},\arc{bc},\arc{c\delta}\})\})=7$,\\
$\xi(\{(D_{t_1}=\{r_2\}, 
S_{t_1}=\{\arc{\delta a},\arc{ab},\arc{b\delta}\}), 
(D_{t_2}=\emptyset, 
S_{t_2}=\emptyset)\})=-1$,\\
$\xi(\{(D_{t_1}=\{r_2\}, 
S_{t_1}=\{\arc{\delta a},\arc{ab},\arc{b\delta}\}), 
(D_{t_2}=\{r_3\}, 
S_{t_2}=\{\arc{\delta b},\arc{bc},\arc{c\delta}\})\})=0$,\\
$\xi(\{(D_{t_1}=\{r_3\}, 
S_{t_1}=\{\arc{\delta b},\arc{bc},\arc{c\delta}\}), 
(D_{t_2}=\emptyset, 
S_{t_2}=\emptyset)\})=-2$,\\
$\xi(\{(D_{t_1}=\{r_3\}, 
S_{t_1}=\{\arc{\delta b},\arc{bc},\arc{c\delta}\}), 
(D_{t_2}=\{r_2\}, 
S_{t_2}=\{\arc{\delta a},\arc{ab},\arc{b\delta}\})\})=0$,\\
$\xi(\{(D_{t_1}=\{r_1,r_2\}, 
S_{t_1}=\{\arc{\delta a},\arc{ab},\arc{bc},\arc{c\delta}\}), 
(D_{t_2}=\emptyset, 
S_{t_2}=\emptyset)\})=10$,\\
$\xi(\{(D_{t_1}=\{r_1,r_2\}, 
S_{t_1}=\{\arc{\delta a},\arc{ac},\arc{cb},\arc{b\delta}\}), 
(D_{t_2}=\emptyset, 
S_{t_2}=\emptyset)\})=7$,\\
$\xi(\{(D_{t_1}=\{r_1,r_2\}, 
S_{t_1}=\{\arc{\delta a},\arc{ab},\arc{bc},\arc{c\delta}\}), 
(D_{t_2}=\{r_3\}, 
S_{t_2}=\{\arc{\delta b},\arc{bc},\arc{c\delta}\})\})=11$,\\
$\xi(\{(D_{t_1}=\{r_1,r_2\}, 
S_{t_1}=\{\arc{\delta a},\arc{ac},\arc{cb},\arc{b\delta}\}), 
(D_{t_2}=\{r_3\}, 
S_{t_2}=\{\arc{\delta b},\arc{bc},\arc{c\delta}\})\})=8$,\\
$\xi(\{(D_{t_1}=\{r_1,r_3\}, 
S_{t_1}=\{\arc{\delta a},\arc{ab},\arc{bc},\arc{c\delta}\}), 
(D_{t_2}=\emptyset, 
S_{t_2}=\emptyset)\})=7$,\\
$\xi(\{(D_{t_1}=\{r_1,r_3\}, 
S_{t_1}=\{\arc{\delta b},\arc{ba},\arc{ac},\arc{c\delta}\}), 
(D_{t_2}=\emptyset, 
S_{t_2}=\emptyset)\})=2$,\\
$\xi(\{(D_{t_1}=\{r_1,r_3\}, 
S_{t_1}=\{\arc{\delta a},\arc{ab},\arc{bc},\arc{c\delta}\}), 
(D_{t_2}=\{r_2\}, 
S_{t_2}=\{\arc{\delta a},\arc{ab},\arc{b\delta}\})\})=9$,\\
$\xi(\{(D_{t_1}=\{r_1,r_3\}, 
S_{t_1}=\{\arc{\delta b},\arc{ba},\arc{ac},\arc{c\delta}\}), 
(D_{t_2}=\{r_2\}, 
S_{t_2}=\{\arc{\delta a},\arc{ab},\arc{b\delta}\})\})=4$,\\
$\xi(\{(D_{t_1}=\{r_2,r_3\}, 
S_{t_1}=\{\arc{\delta a},\arc{ab},\arc{bc},\arc{c\delta}\}), 
(D_{t_2}=\emptyset, 
S_{t_2}=\emptyset)\})=1$.
\end{longtable}
In this example, the optimal delivery routing solution $DS^*$ 
for the given PPDSP is $\{(D_{t_1}=\{r_1,r_2\}, 
S_{t_1}=\{\arc{\delta a},\arc{ab},\arc{bc},\arc{c\delta}\}), 
(D_{t_2}=\{r_3\}, 
S_{t_2}=\{\arc{\delta b},\arc{bc},\arc{c\delta}\})\}$, and 
its value is $11$. 
\end{example}

\section{Problem Formulation}
\label{sec:3}

In this section, we present the following 
mixed-integer programming (MIP) formulation of 
the location-based model for PPDSP and 
prove its correctness as well as 
the space complexity of the generated problem. 

\begin{align}
\max~&\sum_{r\in R}\sum_{t\in T}(w_{r}\cdot y^t_r)-\sum_{t\in T}\sum_{o\in V}\sum_{d\in V}(l^t_{od}\cdot x^t_{od}),\label{eq:1}\\
\textrm{s.t.}~&x^t_{od},y^t_r\in\{0,1\},&\hspace{-160pt}\forall (t,o,d,r):t\in T, o,d\in V, r\in R,\label{eq:2}\\
&\sum_{t\in T}y^t_r\le 1,&\hspace{-160pt}\forall r:r\in R,\label{eq:3}\\
&y^t_r\le\sum_{o\in V\atop o\neq f(r)}x^t_{of(r)},&\hspace{-160pt}\forall (t,r):t\in T, r\in R,\label{eq:4}\\
&y^t_r\le\sum_{o\in V\atop o\neq g(r)}x^t_{og(r)},&\hspace{-160pt}\forall (t,r):t\in T, r\in R,\label{eq:5}\\
&\sum_{d\in V}x^t_{od}-\sum_{d\in V}x^t_{do}=0,&\hspace{-160pt}\forall (t,o):t\in T, o\in V,\label{eq:6}\\
&\sum_{d\in V\atop d\neq o}x^t_{od}\le 1,&\hspace{-160pt}\forall (t,o):t\in T, o\in V,\label{eq:7}\\
&u^t_d-u^t_o\ge 1-|V|(1-x^t_{od}),\nonumber\\
&&\hspace{-160pt}\forall (t,o,d):t\in T, o,d\in V\setminus\{0\}, o\neq d,\label{eq:8}\\
&u^t_{f(r)}-u^t_{g(r)}<|V|(1-y^t_r),&\hspace{-160pt}\forall (t,r):t\in T, r\in R,\label{eq:9}\\
&\begin{cases}
\Gamma-M\cdot(1-x^t_{od})\le h^t_d-h^t_o\le\Gamma+M\cdot(1-x^t_{od}),\\
\Gamma\triangleq\sum_{r\in f^{-1}(d)}(q_r\cdot y^t_r)-\sum_{r\in g^{-1}(d)}(q_r\cdot y^t_r),\\
M\triangleq c^t+\sum_{r\in R}q_r,
\end{cases},\nonumber\\
&&\hspace{-160pt}\forall (t,o,d):t\in T, o,d\in V\setminus\{0\}, o\neq d,\label{eq:10}\\
&0\le h^t_v\le c^t,&\hspace{-160pt}\forall (t,v):t\in T, v\in V\setminus\{0\},\label{eq:11}\\
&0\le u^t_v\le |V|-2,&\hspace{-160pt}\forall (t,v):t\in T, v\in V\setminus\{0\}.\label{eq:12}
\end{align}

\subsection{Correctness}
\label{sec:3.1}

The objective function in Eq. \eqref{eq:1} maximizes the profit-cost function 
for all possible delivery routing solutions. 
Eq. \eqref{eq:2} introduces two types of Boolean variables $x^t_{od}$ and $y^t_r$. 
Eq. \eqref{eq:3} guarantees that, each request can be assigned to at most one truck. 

\begin{theorem}
\label{thm:1}
For PPDSP, the Hamiltonian cycle constraints can be guaranteed by the simultaneous 
Eqs. \eqref{eq:4}--\eqref{eq:8} and Eq. \eqref{eq:12}. 
\end{theorem}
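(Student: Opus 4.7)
The plan is to decompose the Hamiltonian cycle definition into three verifiable claims and match each one to a subset of the listed constraints: (i) every non-depot element of $P_t$ is visited and departed exactly once; (ii) the depot is visited and departed exactly once whenever $D_t\neq\emptyset$; and (iii) no subtour exists in $S_t$. Claim (iii) is the MTZ part, while (i) and (ii) are the degree part; the two halves interact only through the location of the depot in the traced chain.

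For claim (i), I would start from Eq.~\eqref{eq:7}, which caps $\sum_{d\neq o}x^t_{od}\le 1$ at every node $o$. Applying Eq.~\eqref{eq:6}, whose two sums include the same self-loop term $x^t_{oo}$ that cancels, yields $\sum_{d\neq o}x^t_{do}=\sum_{d\neq o}x^t_{od}\le 1$, so each node also has at most one non-self-loop incoming arc. For any $r\in D_t$, i.e.\ $y^t_r=1$, Eqs.~\eqref{eq:4} and \eqref{eq:5} force $\sum_{o\neq f(r)}x^t_{of(r)}\ge 1$ and $\sum_{o\neq g(r)}x^t_{og(r)}\ge 1$. Combining the lower and upper bounds pins both sums to exactly $1$, and flow balance then gives exactly one outgoing non-self-loop arc at $f(r)$ and $g(r)$; this establishes that every element of $P_t\setminus\{0\}$ is visited and departed exactly once in the sense of the definition.

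For claims (iii) and (ii), I would use the MTZ potentials $u^t_v$ constrained by Eqs.~\eqref{eq:8} and \eqref{eq:12}. Suppose there were a subtour $v_1\to v_2\to\cdots\to v_k\to v_1$ with all $v_i\in V\setminus\{0\}$. Each arc is used, so Eq.~\eqref{eq:8} gives $u^t_{v_{i+1}}-u^t_{v_i}\ge 1$; summing around the cycle telescopes the left side to $0$, producing $0\ge k$, a contradiction. To place the depot on the route, I would pick any $r\in D_t$ and trace predecessors from $f(r)$ using Eq.~\eqref{eq:6} (every visited node has an incoming non-self-loop arc by the argument above). Finiteness of $V$ forces the chain to close into a cycle, and by the MTZ argument that cycle must contain $0$; applying Eqs.~\eqref{eq:6} and \eqref{eq:7} at the depot itself then pins it down to exactly one incoming and one outgoing arc.

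The main obstacle I anticipate is handling self-loops carefully: Eq.~\eqref{eq:7} bounds only non-self-loop outgoing arcs, Eq.~\eqref{eq:6} sums over every arc, and the definition of $P_t$ explicitly discounts the case where predecessor and successor coincide. Verifying that self-loop contributions cancel cleanly in Eq.~\eqref{eq:6}, so that the degree bound transfers from outgoing to incoming arcs in the correct non-self-loop form, and that the resulting ``exactly once'' statement matches the convention used in the definition of $P_t$, is the step most likely to require a careful separate argument rather than being an immediate consequence.
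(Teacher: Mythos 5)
Your proof is correct and follows essentially the same route as the paper: Eqs.~\eqref{eq:4}--\eqref{eq:7} pin down the in/out degrees of the pickup and dropoff nodes, Eqs.~\eqref{eq:8} and \eqref{eq:12} eliminate depot-free subtours via the standard MTZ telescoping, and their combination forces the depot onto the route. The paper states these steps without justification (``it is clear from the above\dots''), whereas you supply the details it omits --- the self-loop cancellation in Eq.~\eqref{eq:6}, the transfer of the degree bound from outgoing to incoming arcs, and the predecessor-tracing argument showing the closed chain through $f(r)$ must pass through node $0$ --- so your write-up is a strictly more rigorous version of the same argument.
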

\begin{proof}
According to Eq. \eqref{eq:4} (resp. Eq. \eqref{eq:5}), 
if request $r$ is assigned to truck $t$, then truck $t$ 
reaches the pickup (resp. dropoff) point of $r$ at least once via 
a point that is not the pickup (resp. dropoff) point of $r$. 
Eq. \eqref{eq:6} ensures that the number of visits and 
the number of departures of a truck at any location must be equal. 
Eq. \eqref{eq:7} restricts any truck departing from 
a location to at most one other location. 
It is clear from the above that, all nodes of $\bigcup_{r\in D_t}\{f(r),g(r)\}$ 
are ensured to be visited and departed by truck $t$ exactly once. 

Eq. \eqref{eq:8}, a canonical MTZ subtour elimination constraint \cite{journals/jacm/MillerTZ60}, 
resrticts the integer variables $u^t_v$, where $v\in V\setminus\{0\}$, 
such that they form an ascending series to represent the order 
that truck $t$ arrives at each location $v$. 
Eq. \eqref{eq:12} gives the domain of such variables $u^t_v$. 
Furthermore, Eq. \eqref{eq:8} associated with Eqs. \eqref{eq:4}--\eqref{eq:7} 
also enforces that depot 0 must be visited and departed by 
truck $t$ exactly once if $D_t\neq\emptyset$. 
\end{proof}

Eq. \eqref{eq:9} ensures that, for any request $r$, the arrival of its 
pickup point must precede the arrival of its dropoff point by truck $t$ 
if $y^t_r=1$. 

\begin{theorem}
\label{thm:2}
For PPDSP, the capacity constraint can be guaranteed by the 
Eqs. \eqref{eq:10} and \eqref{eq:11}. 
\end{theorem}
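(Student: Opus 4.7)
The plan is to interpret each auxiliary variable $h^t_v$ as the cargo load carried by truck $t$ immediately after serving location $v$, and show that Eqs. \eqref{eq:10} and \eqref{eq:11} propagate and bound this load correctly along the truck's route.

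First, I would parse Eq. \eqref{eq:10}: the quantity $\Gamma$ is the net cargo change at $d$, namely the sum of pickup volumes minus the sum of dropoff volumes among requests assigned to truck $t$ whose pickup or dropoff point is $d$. When truck $t$ traverses arc $\arc{od}$ (i.e., $x^t_{od}=1$), the two-sided inequality collapses to $h^t_d - h^t_o = \Gamma$, which is exactly the load-update rule from one visited node to the next. When $x^t_{od}=0$, the choice $M = c^t + \sum_{r \in R} q_r$ is immediately seen to make the constraint vacuous: by Eq. \eqref{eq:11} we have $|h^t_d - h^t_o| \le c^t$, and by definition $|\Gamma| \le \sum_{r \in R} q_r$, so $|h^t_d - h^t_o - \Gamma| \le M$ holds automatically.

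Combining this with Eq. \eqref{eq:11}, the load at every visited non-depot node lies in $[0, c^t]$. Because cargo volumes change only at service nodes and remain constant along arcs, bounding the load at each visited node is equivalent to bounding it at every instant between depot departure and return, which is the desired capacity condition.

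The main obstacle is that Eq. \eqref{eq:10} only pins down \emph{differences} of $h^t_v$ along the route, not their absolute values, so a priori one could worry that the MIP merely forces the load \emph{range} (rather than the load itself) to be within $c^t$. To close this gap, I would invoke Eq. \eqref{eq:9} (loading before unloading, ensuring the true load is non-negative throughout) together with the structural fact that each assigned request contributes a matched $+q_r/-q_r$ pair, so the total load change around the Hamiltonian cycle guaranteed by Theorem \ref{thm:1} is zero and the true load returns to $0$ by the last visited non-depot node. This anchors the $h$-values to the actual loads, upgrading the range bound to $0 \le h^t_v \le c^t$ pointwise and matching the capacity constraint exactly.
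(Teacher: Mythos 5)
Your proposal is correct and follows the same basic route as the paper's proof: interpret $h^t_v$ as the load of truck $t$ on departing $v$, observe that the big-$M$ inequality in Eq. \eqref{eq:10} collapses to $h^t_d-h^t_o=\Gamma$ exactly when $x^t_{od}=1$ and is vacuous otherwise (your check that $M=c^t+\sum_{r\in R}q_r$ suffices is precisely the calculation the paper intends), and then read the capacity bound off Eq. \eqref{eq:11}. Where you genuinely go beyond the paper is your third paragraph: the paper's proof simply asserts that $h^t_v$ ``can be considered as'' the departing load and stops, whereas you correctly notice that Eqs. \eqref{eq:10}--\eqref{eq:11} determine the $h^t_v$ along a route only up to an additive constant (arcs incident to the depot are excluded from Eq. \eqref{eq:10}), so a priori only the \emph{range} of the true load is bounded by $c^t$. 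Your patch is essentially right, though one phrase is slightly imprecise: the constraints do not literally anchor $h^t_v$ to the actual load $L_v$; they force $h^t_v=L_v+\delta$ for a single constant $\delta$ along the route, and since the true load vanishes at the last visited non-depot node (every assigned request is both picked up and dropped off on the route, with pickup preceding dropoff by Eq. \eqref{eq:9}), one gets $\delta=h^t_{v_{\mathrm{last}}}\ge 0$ and hence $L_v\le h^t_v\le c^t$ pointwise, with $L_v\ge 0$ supplied by loading-before-unloading. So your argument both reproduces the paper's reasoning and closes a gap the paper's own proof leaves implicit.
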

\begin{proof}
We introduce another integer variables $h^t_v$, where $v\in V\setminus\{0\}$, 
in Eq. \eqref{eq:10}, whose domains are between 0 and $c^t$ 
(i.e., the capacity of truck $t$) as given in Eq. \eqref{eq:11}. 
Such a variable $h^t_v$ can be considered as the loading volume of truck $t$ 
when it departs at location $v$. 
We define the amount of change in the loading of truck $t$ 
at location $v$ as $\Gamma$, which equals the loaded amount at $v$ 
(i.e., $\sum_{r\in f^{-1}(v)}(q_r\cdot y^t_r)$)
minus the unloaded amount at $v$ (i.e., $\sum_{r\in g^{-1}(v)}(q_r\cdot y^t_r)$). 
The main part of Eq. \eqref{eq:10} guarantees that $h^t_d-h^t_o$ is exactly 
equal to $\Gamma$ when $x^t_{od}=1$, which is written as a big-$M$ linear inequality, 
where we specify $M$ as $c^t+\sum_{r\in R}q_r$. 
\end{proof}

\subsection{Space Complexity}
\label{sec:3.2}

Consider that the number of trucks is $m$ (i.e., $|T|=m$) and 
the number of requests is $n$ (i.e., $|R|=n$). 
Assume that the average repetition rate of the same locations is $k$, where $k\ge 1$, 
then we can estimate the number of unique locations as $2nk^{-1}$, 
which is also the number of integrated nodes (viz., $|V\setminus\{0\}|=2nk^{-1}$). 

\begin{theorem}
\label{thm:3}
The number of linear (in)equations corresponding to 
Eqs. \eqref{eq:3}--\eqref{eq:10} is always of a pseudo-polynomial size, 
and both this number and the number of required variables are 
bounded by $\Theta(mn^2k^{-2})$. 
\end{theorem}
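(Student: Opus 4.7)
The plan is a direct counting argument: I would enumerate each family of linear (in)equations from Eqs.~\eqref{eq:3}--\eqref{eq:10} according to the cardinality of its index set, enumerate each family of decision variables in the same manner, and then identify the asymptotically dominant contributions to conclude that both totals fall in $\Theta(mn^2k^{-2})$.

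First, I would record the input sizes: by the integration hypothesis, $|V\setminus\{0\}|=2nk^{-1}$ and $|V|=2nk^{-1}+1$, while $|T|=m$ and $|R|=n$. Then I would tally the constraints family by family. Eq.~\eqref{eq:3} contributes $n$ inequalities; Eqs.~\eqref{eq:4}, \eqref{eq:5}, \eqref{eq:9} are indexed over $T\times R$ and so contribute $mn$ each; Eqs.~\eqref{eq:6} and \eqref{eq:7} are indexed over $T\times V$ and contribute $m(2nk^{-1}+1)$ each; Eqs.~\eqref{eq:8} and \eqref{eq:10} are indexed over $T\times(V\setminus\{0\})\times(V\setminus\{0\})$ with $o\neq d$, and therefore each contribute $m\cdot 2nk^{-1}\cdot(2nk^{-1}-1)$, which is $\Theta(mn^2k^{-2})$. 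Summing these positive terms, the two cubic-index families dominate and a matching lower bound is provided by either one alone, yielding the $\Theta(mn^2k^{-2})$ bound on the total number of linear (in)equations.

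For the variable count I would apply the same procedure: the arc-indicator variables $x^t_{od}$ number $m|V|^2=\Theta(mn^2k^{-2})$, the assignment variables $y^t_r$ number $mn$, and the ordering and load variables $u^t_v$ and $h^t_v$ each number $m(|V|-1)=\Theta(mnk^{-1})$. The $x^t_{od}$ family dominates and also supplies the matching lower bound, so the total variable count is $\Theta(mn^2k^{-2})$ as well.

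The pseudo-polynomial claim then follows immediately because every count above is a polynomial expression in the parameters $m$, $n$, and $k^{-1}$. I do not expect any serious technical obstacle here: the argument is essentially bookkeeping. The only minor pitfalls I would be careful about are (i) retaining the restriction $o\neq d$ when counting Eqs.~\eqref{eq:8} and \eqref{eq:10}, so that the contribution is $|V\setminus\{0\}|\cdot(|V\setminus\{0\}|-1)$ rather than $|V\setminus\{0\}|^2$ (this does not affect the $\Theta$ bound but should still be recorded for correctness), and (ii) supplying matching lower bounds from the dominant families so that $\Theta$, rather than merely $O$, is justified.
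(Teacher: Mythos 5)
Your proposal is correct and follows essentially the same route as the paper's proof: a family-by-family tally of constraints and variables (matching the counts in \Cref{tab:3.1}), with the arc-indexed families Eqs.~\eqref{eq:8}, \eqref{eq:10} and the variables $x^t_{od}$ identified as the dominant $\Theta(mn^2k^{-2})$ terms. Your explicit attention to the $o\neq d$ restriction and to the matching lower bound needed for $\Theta$ is slightly more careful than the paper's presentation, but the argument is the same.
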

\begin{proof}
The number of Boolean variables $x^t_{od}$ (resp. $y^t_r$) required 
in proposed formulation is $m(1+2nk^{-1})^2$ (resp. $mn$); 
while both the number of required integer variables $u^t_d$ and $h^t_d$ 
are $m(1+2nk^{-1})$. 
In \Cref{tab:3.1}, we list the bounded numbers of linear (in)equations 
that correspond to Eqs. \eqref{eq:3}--\eqref{eq:10} 
involved in the proposed PPDSP formulation. 
Therefore, the number of required variable as well as the number of 
corresponded linear (in)equations are of $\Theta(mn^2k^{-2})$. 
\end{proof}

\setcounter{table}{0}
\begin{table}[t]
\caption{The bounded numbers of linear (in)equations corresponding to 
the constraints that are formulated in Eqs. \eqref{eq:3}--\eqref{eq:10}.}
\label{tab:3.1}
\centering
\begin{tabular}{lc|lc}
\toprule
Constraint & $\#$(In)equations & Constraint & $\#$(In)equations \\
\midrule
Eq. \eqref{eq:3} & $n$             & Eq. \eqref{eq:7}  & $m(1+2nk^{-1})$ \\
Eq. \eqref{eq:4} & $mn$            & Eq. \eqref{eq:8}  & $m\binom{2nk^{-1}}{2}$ \\
Eq. \eqref{eq:5} & $mn$            & Eq. \eqref{eq:9}  & $mn$ \\
Eq. \eqref{eq:6} & $m(1+2nk^{-1})$ & Eq. \eqref{eq:10} & $m\binom{2nk^{-1}}{2}$ \\
\bottomrule
\end{tabular}
\end{table}

\section{Experiment}
\label{sec:4}

We compare the performance of a MIP optimizer 
in solving randomly generated PPDSP instances based on the proposed formulation 
(i.e., location-based model), and based on the existing formulation 
(i.e., request-based model), respectively. 
Please refer to \Cref{apdx:1} for the specific description of 
the existing formulation used for the comparative experiments. 

\subsection{Instances Generation}
\label{sec:4.1}

The directed graph informations for generating instances 
are set based on the samples of TSPLIB benchmark with 
displaying data as the coordinates of nodes.\footnote{
	\url{http://comopt.ifi.uni-heidelberg.de/software/TSPLIB95/tsp/}
} 
We denote the number of nodes contained in the selected sample as $|V|$, 
and set the first of these nodes to be the depot node. 
The pickup and dropoff points of each request are chosen from 
the non-depot nodes. 
Therefore, given an average repetition rate $k$ for the same locations, 
we need to repeatably select $n$ pairs of non-depot nodes 
(i.e., a total of $2n$ repeatable non-depot nodes) to correspond to $n$ requests, 
where $n=\round(\frac{k(|V|-1)}{2})$. 

In \Cref{alg:4.1}, we construct the \textsl{repeaTimeList} of length $|V|-1$ 
to mark the number of times of each non-depot node being selected. 
In order to ensure that each non-depot node is selected at least once, 
each value on such list is initialized to 1. 
We continuously generate a random index of \textsl{repeaTimeList} and 
add one to the value corresponding to the generated index 
(i.e., the number of times of the non-depot node being selected increases by one) 
until the sum of these numbers reaches $2n$. 

\begin{algorithm}[h!]
\DontPrintSemicolon
\caption{Randomly specify the number of repetitions of each non-depot node}
\label{alg:4.1}
\SetDataSty{texttt}
\SetCommentSty{textrm}
\SetKwComment{tcp}{$\rhd$ }{}
\SetKwData{true}{true}
\SetKwData{false}{false}
\SetKwInOut{inpt}{Input}
\SetKwInOut{init}{Init.}
\inpt{$G=(V,E)$, $k$}
\init{$n\leftarrow\round(\frac{k(|V|-1)}{2})$, $\textsl{repeaTimeList}\leftarrow[|V|-1~\textrm{of}~1]$}
\While{$\sum_i \textsl{repeaTimeList}[i]<2n$}{
	$i\leftarrow\round(\textsc{RandomUniform}(0,|V|-2))$\;
	$\textsl{repeaTimeList}[i]\leftarrow\textsl{repeaTimeList}[i]+1$\;
}
\Return{\textsl{repeaTimeList}}\;
\end{algorithm}

The pickup and dropoff nodes for each request are randomly paired up in 
\Cref{alg:4.2}. We first rewrite \textsl{repeaTimeList} as a list of length $2n$, 
\textsl{shuffList}, which consists of all selected non-depot nodes, 
where the number of repetitions indicates the number of times they are selected. 
For example, we have $\textsl{shuffList}=[0,0,0,1,2,2]$ for 
$\textsl{repeaTimeList}=[3,1,2]$. Next, we shuffle \textsl{shuffList} 
and clear \textsl{pairList}, which is used to store pairs of nodes 
indicating the pickup and dropoff points of all requests. 
We then divide \textsl{shuffList} into $n$ pairs in order. 
If two nodes of any pair are identical (i.e., the pickup and dropoff 
are the same point), or the pair with considering order 
is already contained in \textsl{pairList}, 
then we are back to \Cref{alg:4.2:line:4}. 
Otherwise, we append the eligible pair of nodes into \textsl{pairList} 
until the last pair is added. 

\begin{algorithm}[h!]
\DontPrintSemicolon
\caption{Randomly pair up the pickup and dropoff nodes for each request}
\label{alg:4.2}
\SetDataSty{texttt}
\SetCommentSty{textrm}
\SetKwComment{tcp}{$\rhd$ }{}
\SetKwData{true}{true}
\SetKwData{false}{false}
\SetKwInOut{inpt}{Input}
\SetKwInOut{init}{Init.}
\inpt{$G=(V,E)$, $n=\round(\frac{k(|V|-1)}{2})$, \textsl{repeaTimeList}}
\init{$\textsl{shuffList}\leftarrow[\,]$, $\textsl{pairList}\leftarrow[\,]$, $\textsl{reshuffle}\leftarrow\true$}
\For{$i\leftarrow 0$ \KwTo $|V|-2$}{
	$\textsl{shuffList}.\textsc{Extend}([\textsl{repeaTimeList}[i]~\textrm{of}~i])$\;
}
\While{\textsl{reshuffle}}{
	\textsc{RandomShuffle}(\textsl{shuffList})\;\label{alg:4.2:line:4}
	$\textsl{pairList}\leftarrow[\,]$\;
	\For{$i\leftarrow 0$ \KwTo $n-1$}{
		\eIf{$\textsl{shuffList}[2i]=\textsl{shuffList}[1+2i]$~\ensuremath{\mathbf{or}}~\newline$[\textsl{shuffList}[2i], \textsl{shuffList}[1+2i]]$~\ensuremath{\mathbf{in}}~\textsl{pairList}}{
			\textbf{break}\;
		}{
			$\textsl{pairList}.\textsc{Append}([\textsl{shuffList}[2i], \textsl{shuffList}[1+2i]])$\;
			\lIf{$i=n-1$}{$\textsl{reshuffle}\leftarrow\false$}
		}
	}
}
\Return{\textsl{pairList}}\;
\end{algorithm}

Since we intend to generate instances corresponding to 
different $k$ for each selected sample of TSPLIB benchmark, 
and the number of requests $n$ gets smaller as $k$ decreases, 
we need to produce a decrementable list of node pairs 
such that we can remove some of the node pairs to correspond to 
smaller $k$, but the remaining part of the node pairs still 
contains all locations in the sample other than that as depot 
(i.e., they still need to be selected at least once). 
Therefore, in \Cref{alg:4.3}, we assume that a sorted list 
of node pairs pops the end elements out of it one by one, 
yet it is always guaranteed that every non-depot node 
is selected at least once. 
We insert the node pair in which both nodes are currently 
selected only once into the frontmost of \textsl{head} in \Cref{alg:4.3:line:7}, 
and then append the node pair in which one of them 
is selected only once into \textsl{head} in \Cref{alg:4.3:line:11}. 
Then we keep inserting the node pair in which 
the sum of the repetitions of the two nodes is currently maximum 
into the frontmost of \textsl{tail} in \Cref{alg:4.3:line:22}. 
Note that such insertions and appends require popping the 
corresponding node pair out of \textsl{pairList} and updating $\mathcal{L}$. 
We merge \textsl{head} and \textsl{tail} to obtain 
a sorted list of node pairs \textsl{sortedPairs} at the end of \Cref{alg:4.3}. 
Finally, we can generate a list containing $n$ requests 
for each selected sample of TSPLIB benchmark, 
as shown in \Cref{alg:4.4}. 

\begin{algorithm}[h!]
\DontPrintSemicolon
\caption{Sort the node pairs by the sum of the repeat times of each node in the node pair}
\label{alg:4.3}
\SetDataSty{texttt}
\SetCommentSty{textrm}
\SetKwComment{tcp}{$\rhd$ }{}
\SetKwData{true}{true}
\SetKwData{false}{false}
\SetKwInOut{inpt}{Input}
\SetKwInOut{init}{Init.}
\inpt{\textsl{repeaTimeList}, \textsl{pairList}}
\init{$\mathcal{L}\leftarrow\textsl{repeaTimeList}$, $\textsl{head}\leftarrow[\,]$, $\textsl{tail}\leftarrow[\,]$, $\textsl{max}\leftarrow 0$, $\textsl{maxIndex}\leftarrow -1$, $\textsl{sortedPairs}\leftarrow[\,]$}
\While{$|\textsl{pairList}|>0$}{
	$i\leftarrow 0$\;
	\While{$i<|\textsl{pairList}|$}{
		\uIf{$\mathcal{L}[\textsl{pairList}[i][0]]=1$~\ensuremath{\mathbf{and}}~$\mathcal{L}[\textsl{pairList}[i][1]]=1$}{
			$\mathcal{L}[\textsl{pairList}[i][0]]\leftarrow\mathcal{L}[\textsl{pairList}[i][0]]-1$\;
			$\mathcal{L}[\textsl{pairList}[i][1]]\leftarrow\mathcal{L}[\textsl{pairList}[i][1]]-1$\;
			$\textsl{head}.\textsc{Insert}(0, \textsl{pairList}.\textsc{Pop}(i))$\;\label{alg:4.3:line:7}
		}\uElseIf{$\mathcal{L}[\textsl{pairList}[i][0]]=1$~\ensuremath{\mathbf{or}}~$\mathcal{L}[\textsl{pairList}[i][1]]=1$}{
			$\mathcal{L}[\textsl{pairList}[i][0]]\leftarrow\mathcal{L}[\textsl{pairList}[i][0]]-1$\;
			$\mathcal{L}[\textsl{pairList}[i][1]]\leftarrow\mathcal{L}[\textsl{pairList}[i][1]]-1$\;
			$\textsl{head}.\textsc{Append}(\textsl{pairList}.\textsc{Pop}(i))$\;\label{alg:4.3:line:11}
		}\lElse{
			$i\leftarrow i+1$
		}
	}
	$\textsl{max}\leftarrow 0$\;
	$\textsl{maxIndex}\leftarrow -1$\;
	\For{$j\leftarrow 0$ \KwTo $|\textsl{pairList}|-1$}{
		\If{$\mathcal{L}[\textsl{pairList}[j][0]]+\mathcal{L}[\textsl{pairList}[j][1]]>\textsl{max}$}{
			$\textsl{max}\leftarrow\mathcal{L}[\textsl{pairList}[j][0]] + \mathcal{L}[\textsl{pairList}[j][1]]$\;
			$\textsl{maxIndex}\leftarrow j$\;
		}
	}
	\If{$\textsl{maxIndex}\neq -1$}{
		$\mathcal{L}[\textsl{pairList}[\textsl{maxIndex}][0]]\leftarrow\mathcal{L}[\textsl{pairList}[\textsl{maxIndex}][0]]-1$\;
		$\mathcal{L}[\textsl{pairList}[\textsl{maxIndex}][1]]\leftarrow\mathcal{L}[\textsl{pairList}[\textsl{maxIndex}][0]]-1$\;
		$\textsl{tail}.\textsc{Insert}(0, \textsl{pairList}.\textsc{Pop}(\textsl{maxIndex}))$\;\label{alg:4.3:line:22}
	}
}
$\textsl{sortedPairs}\leftarrow\textsl{head}.\textsc{Extend}(\textsl{tail})$\;
\Return{\textsl{sortedPairs}}\;
\end{algorithm}

\begin{algorithm}[h!]
\DontPrintSemicolon
\caption{Randomly generate the list of requests}
\label{alg:4.4}
\SetDataSty{texttt}
\SetCommentSty{textrm}
\SetKwComment{tcp}{$\rhd$ }{}
\SetKwData{true}{true}
\SetKwData{false}{false}
\SetKwInOut{inpt}{Input}
\SetKwInOut{init}{Init.}
\inpt{$G=(V,E)$, $n=\round(\frac{k(|V|-1)}{2})$, \textsl{sortedPairs}}
\init{$\textsl{avgDistance}\leftarrow\frac{1}{|V|\cdot|V-1|}\sum_{o\in V}\sum_{d\in V\atop d\neq o}\textsc{Distance}(\arc{od})$, $\textsl{avgVolume}\leftarrow 5$, $\textsl{requestList}\leftarrow [\,]$}
\For{$r\leftarrow 0$ \KwTo $n-1$}{
	$q_r\leftarrow\round(\textsc{RandomUniform}(1,2\times\textsl{avgVolume}-1))$\;
	$w_r\leftarrow\round(2\times\textsl{avgDistance}\times q_r\div\textsl{avgVolume})$\;
	$f(r)\leftarrow\textsl{sortedPairs}[r][0]$\;
	$g(r)\leftarrow\textsl{sortedPairs}[r][1]$\;
	$\textsl{requestList}.\textsc{Append}(\langle w_r, q_r, f(r), g(r)\rangle)$\;
}
\Return{\textsl{requestList}}\;
\end{algorithm}

In addition, we generate data for the three types of trucks recursively 
in the order of maximum load capacity of 25, 20, and 15 
until the number of generated trucks reaches $m$. 
And these three types of trucks correspond to their respective 
cost coefficients for each traversing arc of 1.2, 1, and 0.8. 
For example, for a truck $t$ of load capacity 15, 
$l^t_{od}$, i.e., the cost of it passing through the arc 
$\arc{od}$ is $0.8\times\textsc{Distance}(\arc{od})$.\footnote{
	$\textsc{Distance}(\arc{od})$ is the \emph{Euclidean distance} between the coordinates of node $o$ and the coordinates of node $d$.
} 
According to the average volume of 5 for each request set in \Cref{alg:4.4}, 
it is expected that each truck can accommodate four requests at the same time. 

We generate a total of $3\times5\times5=75$ PPDSP instances 
for our comparative experiments based on the proposed formulation 
and on the existing formulation, respectively, 
according to the following parameter settings. 
\begin{itemize}
	\item The selected TSPLIB samples are \{\emph{burma14}, \emph{ulysses16}, \emph{ulysses22}\};
	\item The average repetition rate of the same locations, $k\in\{1, 1.5, 2, 2.5, 3\}$;
	\item The number of trucks $m\in\{2, 4, 6, 8, 10\}$.
\end{itemize}

\subsection{Experimental Settings}
\label{sec:4.2}

All experiments are performed on an Apple M1 Pro chip, 
using the Ubuntu 18.04.6 LTS operating system via the Podman virtual machine 
with 19 GB of allocated memory. 
A single CPU core is used for each experiment. 
We implemented the instance generators for both the proposed formulation 
and the existing formulation by using Python 3. 
Each generated problem instance is solved by 
the MIP optimizer--\textsc{Cplex} of version 20.1.0.0 \cite{cplex2020v1210} 
within 3,600 CPU seconds time limit. 
Source code for our experiments is 
available at \url{https://github.com/ReprodSuplem/PPDSP}. 

\subsection{Results}
\label{sec:4.3}

Tables \ref{tab:4.1}--\ref{tab:4.3} show the performance of 
the existing formulation-based method 
and our proposed formulation-based method 
in terms of the number of generated variables (\#Var.), 
the number of generated constraints (\#Con.), 
and the optimal values (Opt.) 
for the various average repetition rates of the same locations ($k$) 
and the different numbers of trucks ($m$), 
corresponding to TSPLIB samples \emph{burma14}, \emph{ulysses16} and 
\emph{ulysses22}, respectively. 
Each cell recording the left and right values 
corresponds to a comparison item, 
where the left value refers to the performance of 
the item based on the existing formulation-based method, 
while the right value corresponds to the performance of 
the item based on the proposed formulation-based method. 
We compare the performance of these two methods 
in such cells and put the values of the dominant side in bold. 

\begin{table}[t]
\caption{Comparison of the existing formulation-based method with the proposed formulation-based method for TSPLIB sample \emph{burma14} in terms of the number of generated variables, the number of generated constraints, and the optimal values.}
\label{tab:4.1}
\centering
\makebox[\textwidth]{
\setlength{\tabcolsep}{4pt}{
\!\begin{tabular}{c|c|cc|cc|cc|cc|cc}
\toprule
\multirow{2}{*}{$m$} & \multirow{2}{*}{item} & \multicolumn{2}{c|}{$k=1$} & \multicolumn{2}{c|}{$k=1.5$} & \multicolumn{2}{c|}{$k=2$} & \multicolumn{2}{c|}{$k=2.5$} & \multicolumn{2}{c}{$k=3$} \\
 & & \multicolumn{2}{c|}{$(n=7)$} & \multicolumn{2}{c|}{$(n=10)$} & \multicolumn{2}{c|}{$(n=13)$} & \multicolumn{2}{c|}{$(n=16)$} & \multicolumn{2}{c}{$(n=20)$} \\
\midrule
\multirow{3}{*}{$2$} & \#Var. & $576$ & $\bm{458}$ & $1056$ & $\bm{464}$ & $1680$   & $\bm{470}$ & $2448$ & $\bm{476}$ & $3696$  & $\bm{484}$ \\
& \#Con. & $\bm{1027}$ & $1041$ & $1942$ & $\bm{1062}$ & $3145$  & $\bm{1083}$ & $4636$ & $\bm{1104}$ & $7072$ & $\bm{1132}$ \\
& Opt. & $30$      & $30$ & $\bm{42}$ & $40$ & $\bm{56}$ & $53$ & $65$ & $\bm{75}$ & $100$ & $\bm{107}$ \\
\midrule
\multirow{3}{*}{$4$} & \#Var. & $1152$ & $\bm{916}$ & $2112$ & $\bm{928}$ & $3360$   & $\bm{940}$ & $4896$ & $\bm{952}$ & $7392$  & $\bm{968}$ \\
& \#Con. & $\bm{2047}$ & $2075$ & $3874$ & $\bm{2114}$ & $6277$  & $\bm{2153}$ & $9256$ & $\bm{2192}$ & $14124$ & $\bm{2244}$ \\
& Opt. & $\bm{33}$ & $30$ & $\bm{49}$ & $44$ & $58$ & $\bm{61}$ & $53$ & $\bm{77}$ & $68$  & $\bm{103}$ \\
\midrule
\multirow{3}{*}{$6$} & \#Var. & $1728$ & $\bm{1374}$ & $3168$ & $\bm{1392}$ & $5040$   & $\bm{1410}$ & $7344$ & $\bm{1428}$ & $11088$  & $\bm{1452}$ \\
& \#Con. & $\bm{3067}$ & $3109$ & $5806$ & $\bm{3166}$ & $9409$  & $\bm{3223}$ & $13876$ & $\bm{3280}$ & $21176$ & $\bm{3356}$ \\
& Opt. & $\bm{33}$ & $31$ & $\bm{46}$ & $42$ & $57$ & $\bm{58}$ & $43$ & $\bm{81}$ & $46$  & $\bm{103}$ \\
\midrule
\multirow{3}{*}{$8$} & \#Var. & $2304$ & $\bm{1832}$ & $4224$ & $\bm{1856}$ & $6720$   & $\bm{1880}$ & $9792$ & $\bm{1904}$ & $14784$  & $\bm{1936}$ \\
& \#Con. & $\bm{4087}$ & $4143$ & $7738$ & $\bm{4218}$ & $12541$  & $\bm{4293}$ & $18496$ & $\bm{4368}$ & $28228$ & $\bm{4468}$ \\
& Opt. & $\bm{33}$ & $31$ & $\bm{46}$ & $44$ & $56$ & $56$      & $4$  & $\bm{79}$ & $81$  & $\bm{104}$ \\
\midrule
\multirow{3}{*}{$10$} & \#Var. & $2880$ & $\bm{2290}$ & $5280$ & $\bm{2320}$ & $8400$   & $\bm{2350}$ & $12240$ & $\bm{2380}$ & $18480$  & $\bm{2420}$ \\
& \#Con. & $\bm{5107}$ & $5177$ & $9670$ & $\bm{5270}$ & $15673$  & $\bm{5363}$ & $23116$ & $\bm{5456}$ & $35280$ & $\bm{5580}$ \\
& Opt. & $\bm{33}$ & $31$ & $44$      & $44$ & $57$ & $\bm{58}$ & $24$ & $\bm{80}$ & $49$  & $\bm{96}$  \\
\bottomrule
\end{tabular}
}
}
\end{table}

We can see that either the number of variables or 
the number of constraints generated by our proposed method 
is proportional to $m$, 
while neither the number of variables nor the number of 
constraints generated by our proposed method increases significantly 
as $k$ becomes larger. 
Such a result is consistent with \Cref{thm:3}. 
In contrast, although the number of variables and 
the number of constraints produced by the existing method 
is also proportional to $m$, with $k$ becoming larger, 
exponential increases in both of them are observed. 
Furthermore, it is interesting to note that even when $k=1$, 
the proposed method generates fewer variables than the existing method. 

\begin{table}[t]
\caption{Comparison of the existing formulation-based method with the proposed formulation-based method for TSPLIB sample \emph{ulysses16} in terms of the number of generated variables, the number of generated constraints, and the optimal values.}
\label{tab:4.2}
\centering
\makebox[\textwidth]{
\setlength{\tabcolsep}{4pt}{
\!\begin{tabular}{c|c|cc|cc|cc|cc|cc}
\toprule
\multirow{2}{*}{$m$} & \multirow{2}{*}{item} & \multicolumn{2}{c|}{$k=1$} & \multicolumn{2}{c|}{$k=1.5$} & \multicolumn{2}{c|}{$k=2$} & \multicolumn{2}{c|}{$k=2.5$} & \multicolumn{2}{c}{$k=3$} \\
 & & \multicolumn{2}{c|}{$(n=8)$} & \multicolumn{2}{c|}{$(n=11)$} & \multicolumn{2}{c|}{$(n=15)$} & \multicolumn{2}{c|}{$(n=19)$} & \multicolumn{2}{c}{$(n=23)$} \\
\midrule
\multirow{3}{*}{$2$} & \#Var. & $720$ & $\bm{588}$ & $1248$ & $\bm{594}$ & $2176$   & $\bm{602}$ & $3360$ & $\bm{610}$ & $4800$  & $\bm{618}$ \\
& \#Con. & $\bm{1300}$ & $1380$ & $2311$ & $\bm{1401}$ & $4107$  & $\bm{1429}$ & $6415$ & $\bm{1457}$ & $9235$ & $\bm{1485}$ \\
& Opt. & $94$      & $94$ & $\bm{96}$  & $94$  & $109$ & $\bm{112}$ & $157$ & $\bm{187}$ & $168$ & $\bm{230}$ \\
\midrule
\multirow{3}{*}{$4$} & \#Var. & $1440$ & $\bm{1176}$ & $2496$ & $\bm{1188}$ & $4352$   & $\bm{1204}$ & $6720$ & $\bm{1220}$ & $9600$  & $\bm{1236}$ \\
& \#Con. & $\bm{2592}$ & $2752$ & $4611$ & $\bm{2791}$ & $8199$  & $\bm{2843}$ & $12811$ & $\bm{2895}$ & $18447$ & $\bm{2947}$ \\
& Opt. & $\bm{99}$ & $98$ & $\bm{104}$ & $101$ & $104$ & $\bm{124}$ & $140$ & $\bm{198}$ & $29$  & $\bm{247}$ \\
\midrule
\multirow{3}{*}{$6$} & \#Var. & $2160$ & $\bm{1764}$ & $3744$ & $\bm{1782}$ & $6528$   & $\bm{1806}$ & $10080$ & $\bm{1830}$ & $14400$  & $\bm{1854}$ \\
& \#Con. & $\bm{3884}$ & $4124$ & $6911$ & $\bm{4181}$ & $12291$  & $\bm{4257}$ & $19207$ & $\bm{4333}$ & $27659$ & $\bm{4409}$ \\
& Opt. & $\bm{99}$ & $98$ & $\bm{105}$ & $101$ & $104$ & $\bm{139}$ & $92$  & $\bm{205}$ & $0$   & $\bm{226}$ \\
\midrule
\multirow{3}{*}{$8$} & \#Var. & $2880$ & $\bm{2352}$ & $4992$ & $\bm{2376}$ & $8704$   & $\bm{2408}$ & $13440$ & $\bm{2440}$ & $19200$  & $\bm{2472}$ \\
& \#Con. & $\bm{5176}$ & $5496$ & $9211$ & $\bm{5571}$ & $16383$  & $\bm{5671}$ & $25603$ & $\bm{5771}$ & $36871$ & $\bm{5871}$ \\
& Opt. & $\bm{99}$ & $98$ & $91$ & $\bm{98}$   & $89$  & $\bm{138}$ & $75$  & $\bm{176}$ & $96$  & $\bm{241}$ \\
\midrule
\multirow{3}{*}{$10$} & \#Var. & $3600$ & $\bm{2940}$ & $6240$ & $\bm{2970}$ & $10880$   & $\bm{3010}$ & $16800$ & $\bm{3050}$ & $24000$  & $\bm{3090}$ \\
& \#Con. & $\bm{6468}$ & $6868$ & $11511$ & $\bm{6961}$ & $20475$  & $\bm{7085}$ & $31999$ & $\bm{7209}$ & $46083$ & $\bm{7333}$ \\
& Opt. & $\bm{99}$ & $98$ & $96$ & $\bm{101}$  & $111$ & $\bm{130}$ & $82$  & $\bm{186}$ & $27$  & $\bm{221}$  \\
\bottomrule
\end{tabular}
}
}
\end{table}

As $k$ increases, the optimal value obtained by 
the proposed method increases within the time limit; 
while there is no such trend in the optimal values 
obtained by the existing method. 
In addition, as $m$ grows up, in theory, 
the upper bound of the optimal value cannot be smaller. 
However, the optimal values obtained by the proposed method 
and the existing method do not maintain the trend of increasing, 
instead, they both have inflection points. 
This is due to the fact that the search space of the problem becomes huge, 
which makes it inefficient for the solver to update the optimized solution. 
There are even cases on the existing method where the initial solution 
is not obtained until the end of time. 

\begin{table}[t]
\caption{Comparison of the existing formulation-based method with the proposed formulation-based method for TSPLIB sample \emph{ulysses22} in terms of the number of generated variables, the number of generated constraints, and the optimal values.}
\label{tab:4.3}
\centering
\makebox[\textwidth]{
\setlength{\tabcolsep}{4pt}{
\!\begin{tabular}{c|c|cc|cc|cc|cc|cc}
\toprule
\multirow{2}{*}{$m$} & \multirow{2}{*}{item} & \multicolumn{2}{c|}{$k=1$} & \multicolumn{2}{c|}{$k=1.5$} & \multicolumn{2}{c|}{$k=2$} & \multicolumn{2}{c|}{$k=2.5$} & \multicolumn{2}{c}{$k=3$} \\
 & & \multicolumn{2}{c|}{$(n=11)$} & \multicolumn{2}{c|}{$(n=16)$} & \multicolumn{2}{c|}{$(n=21)$} & \multicolumn{2}{c|}{$(n=26)$} & \multicolumn{2}{c}{$(n=32)$} \\
\midrule
\multirow{3}{*}{$2$} & \#Var. & $1248$ & $\bm{1074}$ & $2448$ & $\bm{1084}$ & $4048$   & $\bm{1094}$ & $6048$ & $\bm{1104}$ & $8976$  & $\bm{1116}$ \\
& \#Con. & $\bm{2311}$ & $2685$ & $4636$ & $\bm{2720}$ & $7761$  & $\bm{2755}$ & $11686$ & $\bm{2790}$ & $17452$ & $\bm{2832}$ \\
& Opt. & $116$ & $\bm{120}$ & $161$ & $\bm{206}$ & $182$ & $\bm{225}$ & $85$ & $\bm{297}$ & $79$ & $\bm{336}$ \\
\midrule
\multirow{3}{*}{$4$} & \#Var. & $2496$ & $\bm{2148}$ & $4896$ & $\bm{2168}$ & $8096$   & $\bm{2188}$ & $12096$ & $\bm{2208}$ & $17952$  & $\bm{2232}$ \\
& \#Con. & $\bm{4611}$ & $5359$ & $9256$ & $\bm{5424}$ & $15501$  & $\bm{5489}$ & $23346$ & $\bm{5554}$ & $34872$ & $\bm{5632}$ \\
& Opt. & $\bm{134}$ & $120$ & $141$ & $\bm{185}$ & $53$  & $\bm{216}$ & $34$ & $\bm{277}$ & $15$ & $\bm{327}$ \\
\midrule
\multirow{3}{*}{$6$} & \#Var. & $3744$ & $\bm{3222}$ & $7344$ & $\bm{3252}$ & $12144$   & $\bm{3282}$ & $18144$ & $\bm{3312}$ & $26928$  & $\bm{3348}$ \\
& \#Con. & $\bm{6911}$ & $8033$ & $13876$ & $\bm{8128}$ & $23241$  & $\bm{8223}$ & $35006$ & $\bm{8318}$ & $52292$ & $\bm{8432}$ \\
& Opt. & $115$ & $\bm{125}$ & $136$ & $\bm{192}$ & $0$   & $\bm{227}$ & $45$ & $\bm{242}$ & $0$  & $\bm{343}$ \\
\midrule
\multirow{3}{*}{$8$} & \#Var. & $4992$ & $\bm{4296}$ & $9792$ & $\bm{4336}$ & $16192$   & $\bm{4376}$ & $24192$ & $\bm{4416}$ & $35904$  & $\bm{4464}$ \\
& \#Con. & $\bm{9211}$ & $10707$ & $18496$ & $\bm{10832}$ & $30981$  & $\bm{10957}$ & $46666$ & $\bm{11082}$ & $69712$ & $\bm{11232}$ \\
& Opt. & $\bm{124}$ & $117$ & $102$ & $\bm{206}$ & $25$  & $\bm{224}$ & $58$ & $\bm{228}$ & $41$ & $\bm{276}$ \\
\midrule
\multirow{3}{*}{$10$} & \#Var. & $6240$ & $\bm{5370}$ & $12240$ & $\bm{5420}$ & $20240$   & $\bm{5470}$ & $30240$ & $\bm{5520}$ & $44880$  & $\bm{5580}$ \\
& \#Con. & $\bm{11511}$ & $13381$ & $23116$ & $\bm{13536}$ & $38721$  & $\bm{13691}$ & $58326$ & $\bm{13846}$ & $87132$ & $\bm{14032}$ \\
& Opt. & $\bm{124}$ & $123$ & $31$  & $\bm{173}$ & $44$  & $\bm{204}$ & $42$ & $\bm{243}$ & $44$  & $\bm{319}$  \\
\bottomrule
\end{tabular}
}
}
\end{table}

Last but not least, we can clearly see that 
for those problems with larger $k$ 
(i.e., more nodes that can be integrated), 
the method based on our proposed formulation 
generates fewer variables and constraints, 
as well as achieves larger optimal values, 
than the method based on the existing formulation.

\section{Conclusion}
\label{sec:5}

In this paper, we revisit PPDSP on road networks with the integratable nodes. 
For such application scenarios, we define a location-based graph-theoretic model 
and give the corresponding MIP formulation. 
We prove the correctness of this formulation as well as analyze its space complexity. 
We compare the proposed method with the existing formulation-based method. 
The experimental results show that, for the instances with more integratable nodes, 
our method has a significant advantage over the existing method 
in terms of the generated problem size, and the optimized values.

\section*{Acknowledgments}

The authors are grateful for the support of 
the Progressive Logistic Science Corporate Sponsored Research Programs of 
the Research Center for Advanced Science and Technology, 
The University of Tokyo.

\clearpage
\bibliographystyle{apalike-ejor}
\biboptions{authoryear}
\bibliography{bibsample}

\clearpage
\appendix
\def\appendixname{}
\section{Existing Formulation}
\label{apdx:1}

Here we provide a detailed description of the existing formulation 
proposed in \cite{journals/eor/QiuFN17} that were used to perform 
the comparative experiments in \Cref{sec:4}. 

As we stated before, in order to compare the most essential differences 
between the request-based model and the location-based model, 
the constraints on time windows are not considered in this study. 
However, because in the existing formulation, the time windows constraints 
also act as the subtour elimination constraint. 
For this reason, we replace those constraints on time windows 
in the existing formulation with the same MTZ subtour elimination constraint 
as in our proposed formulation to ensure the fairness of the comparative experiments. 

In the existing formulation, $n$ requests are represented as a 
directed graph $G=(V,E)$, whose node set $V=\{0\}\cup P\cup D\cup\{2n+1\}$, 
where $P=\{1,2,\ldots,n\}$ is the set of pickup nodes, 
$D=\{n+1,n+2,\ldots,2n\}$ is the set of dropoff nodes, 
and node $0$ and $2n+1$ indicate two depots.\footnote{
In the experiments of this study, such two depots are in the identical location.} 
A request $r$ is formalized by using the combination $(r,r+n)$ 
to refer to the pickup and dropoff nodes, and each node $v~(v\in P\cup D)$ 
corresponds to a change amount $q_v$ of the truck's load, 
where $q_r$ and $q_{r+n}$ respectively correspond to the loading and unloading 
of request $r$, $q_r>0$, $q_{r+n}<0$, $q_r=|q_{r+n}|$ and $r\in\{1,2,\ldots,n\}$. 
In addition, we particularly specify $q_0=q_{2n+1}=0$ for the two depot nodes. 
The definitions and notations of the other variables are consistent with 
that previously described in this paper. 
The existing formulation used for comparative experiments are as follows. 

\begin{align}
\max~&\sum_{t\in T}\bigg(\sum_{o\in P}w_{o}\sum_{d\in V}x^t_{od}\bigg)-\sum_{t\in T}\sum_{o\in V}\sum_{d\in V}(l^t_{od}\cdot x^t_{od}),\label{eq:a.1}\\
\textrm{s.t.}~&x^t_{od}\in\{0,1\},&\hspace{-85pt}\forall (t,o,d):t\in T, o,d\in V,\label{eq:a.2}\\
&\sum_{d\in V}x^t_{0d}=\sum_{o\in V}x^t_{o(2n+1)}=1,&\hspace{-85pt}\forall t:t\in T,\label{eq:a.3}\\
&\sum_{t\in T}\sum_{o\in V}x^t_{od}\le 1,&\hspace{-85pt}\forall (t,d):t\in T, d\in P,\label{eq:a.4}\\
&\sum_{d\in V}x^t_{od}-\sum_{d\in V}x^t_{(o+n)d}=0,&\hspace{-85pt}\forall (t,o):t\in T, o\in P,\label{eq:a.5}\\
&\sum_{d\in V}x^t_{vd}-\sum_{o\in V}x^t_{ov}=0,&\hspace{-85pt}\forall (t,v):t\in T, v\in P\cup D,\label{eq:a.6}\\
&u^t_d-u^t_o\ge 1-|V|(1-x^t_{od}),&\hspace{-85pt}\forall (t,o,d):t\in T, o,d\in V, o\neq d,\label{eq:a.7}\\
&u^t_d-u^t_o>0,&\hspace{-85pt}\forall (t,o,d):t\in T, o\in P, d\in D,\label{eq:a.8}\\
&h^t_d-h^t_o\ge q_d-c^t(1-x^t_{od}),&\hspace{-85pt}\forall (t,o,d):t\in T, o,d\in V,\label{eq:a.9}\\
&\max\{0,q_v\}\le h^t_v\le\min\{c^t,c^t+q_v\},&\hspace{-85pt}\forall (t,v):t\in T, v\in V,\label{eq:a.10}\\
&0\le u^t_v\le |V|-1,&\hspace{-85pt}\forall (t,v):t\in T, v\in V\setminus\{0\},\label{eq:a.11}\\
&x^t_{vv}=0,&\hspace{-85pt}\forall (t,v):t\in T, v\in V,\label{eq:a.12}\\
&x^t_{0d}=0,&\hspace{-85pt}\forall (t,d):t\in T, d\in D,\label{eq:a.13}\\
&x^t_{o(2n+1)}=0,&\hspace{-85pt}\forall (t,o):t\in T, o\in P,\label{eq:a.14}\\
&x^t_{o0}=0,&\hspace{-85pt}\forall (t,o):t\in T, o\in P\cup D,\label{eq:a.15}\\
&x^t_{(2n+1)d}=0,&\hspace{-85pt}\forall (t,d):t\in T, d\in P\cup D.\label{eq:a.16}
\end{align}

Eq. \eqref{eq:a.3} restricts that all routes start and end at the depots, 
whereas Eq. \eqref{eq:a.4} constrains each node $d$ cannot be visited more than once. 
Eq. \eqref{eq:a.5} guarantees that pickup and dropoff from a request 
can only be assigned to the same truck. 
Eq. \eqref{eq:a.6} ensures that, for each non-depot node, 
the number of visits and the number of departures of a truck must be equal. 
Eqs. \eqref{eq:a.7} and \eqref{eq:a.8}, associated with the domain of 
$u^t_v$ shown in Eq. \eqref{eq:a.11}, are the MTZ subtour elimination constraint 
and the \emph{loading before unloading} constraint. 
Eqs. \eqref{eq:a.9} and \eqref{eq:a.10} correspond to the capacity constraint. 
Eqs. \eqref{eq:a.12}--\eqref{eq:a.16} are the constraints in 
\cite{journals/eor/QiuFN17} described as helping to improve the solving speed. 
Eq. \eqref{eq:a.12} blocks cycles at the nodes, 
while Eq. Eq. \eqref{eq:a.13} forbids the direct visit of 
dropoff nodes from the start depot. 
Eq. \eqref{eq:a.14} (resp. Eq. \eqref{eq:a.15}) 
prevents an immediate access to the end depot 
after visiting a pickup node (resp. visiting the start depot again). 
Eq. \eqref{eq:a.16} avoids trucks from starting at the end depot. 

It should be noted that we found three incorrectnesses in 
the existing formulations in \cite{journals/eor/QiuFN17}, 
which correspond to Eqs. \eqref{eq:a.3}, \eqref{eq:a.4} and \eqref{eq:a.12} 
in the above formulation, and we have rectified each of them.

\end{document}